\documentclass[aps,pra,twocolumn,showpacs,superscriptaddress,reprint]{revtex4-1}
\usepackage{amsmath}
\usepackage{amsthm}
\usepackage{environ}
\usepackage{tabularx}

\newtheorem{corollary}{Corollary}
\newtheorem{lemma}{Lemma}
\newtheorem{theorem}{Theorem}

\def\|#1>{\left|#1\right\rangle}
\def\<#1|{\left\langle#1\right|}

\usepackage{tikz}
\usetikzlibrary{backgrounds,calc,decorations.pathreplacing,external}

\newcommand{\startqcirc}{%
  \begin{tikzpicture}[x=\qcircx,y=-\qcircy,scale=\qcircscale,baseline=(c)]
  \setcounter{qcirccol}{0}
  \setcounter{qcircrow}{0}
  \global\qcircadjl=0pt
  \global\qcircadjr=0pt
  \global\let\oldcr=\cr
  \def\>{\checkforemp\addtocounter{qcirccol}{1}}
  \def\cr{\checkforemp\setcounter{qcirccol}{0}\addtocounter{qcircrow}{1}}
}

\newif\ifusepdf \usepdffalse
\newcounter{currentfigure}
\setcounter{currentfigure}{0}

\NewEnviron{qcirc}{%
  \ifusepdf
  \vcenter{\hbox{\includegraphics{main-figure\thecurrentfigure}}}
  \addtocounter{currentfigure}{1}
  \else
  \expandafter\startqcirc\BODY
  \checkforemp
  \pgfonlayer{background}
    \draw[opacity=0,white] ([xshift=-.15cm,yshift=.3cm] qc00) -- ([yshift=-.35cm,xshift=-.15cm] qc\theqcircrow0);
    \foreach \row in {0,...,\theqcircrow} {%
      \draw ([xshift=-.5cm] $(qc\row0)-(\qcircadjl,0)$) -- ([xshift=.5cm] $(qc\row\theqcirccol)+(\qcircadjr,0)$);
    }
  \endpgfonlayer
  \coordinate (c) at ($(qc00)!.5!(qc\theqcircrow0)$);
  \end{tikzpicture}
  \fi
}

\newif\ifinsertemp \insertemptrue
\def\hasgate{\insertempfalse}
\def\checkforemp{\ifinsertemp\emp\else\insertemptrue\fi}

\newcounter{qcirccol}
\newcounter{qcircrow}
\newdimen\qcircadjl
\newdimen\qcircadjr
\newdimen\qcircx \qcircx=1cm
\newdimen\qcircy \qcircy=.65cm
\def\qcircscale{1.0}

\def\pc{\hasgate\draw[fill=black] (\theqcirccol,\theqcircrow) coordinate (qc\theqcircrow\theqcirccol) circle (.1cm);}
\def\nc{\hasgate\draw[fill=white] (\theqcirccol,\theqcircrow) coordinate (qc\theqcircrow\theqcirccol) circle (.1cm);}

\def\casec{%
  \hasgate
  \coordinate (qc\theqcircrow\theqcirccol) at (\theqcirccol,\theqcircrow);
  \draw[fill=white] ([xshift=-.075cm] \theqcirccol,\theqcircrow) circle (.1cm);
  \draw[fill=black] ([xshift=.075cm] \theqcirccol,\theqcircrow) circle (.1cm);
}

\def\g#1{\hasgate\node[draw,fill=white,inner sep=1.5pt] (qc\theqcircrow\theqcirccol node) at (\theqcirccol,\theqcircrow) {$#1$}; \coordinate (qc\theqcircrow\theqcirccol) at (qc\theqcircrow\theqcirccol node.center);}

\def\caseg#1#2{\hasgate\node[draw,fill=white,inner sep=1.5pt] (qc\theqcircrow\theqcirccol node) at (\theqcirccol,\theqcircrow) {\raise3pt\hbox{$\scriptstyle #1$}\lower3pt\hbox{$\scriptstyle #2$}}; \draw (qc\theqcircrow\theqcirccol node.north east) -- (qc\theqcircrow\theqcirccol node.south west); \coordinate (qc\theqcircrow\theqcirccol) at (qc\theqcircrow\theqcirccol node.center);}

\def\targ{%
  \hasgate
  \draw[fill=white] (\theqcirccol,\theqcircrow) coordinate (qc\theqcircrow\theqcirccol) circle (.2cm);
  \draw ($(\theqcirccol,\theqcircrow)-(.2cm,0)$) -- ++(right:.4cm);
  \draw ($(\theqcirccol,\theqcircrow)-(0,.2cm)$) -- ++(up:.4cm);
}

\def\swap#1#2#3{%
  \draw[white] ($(qc#2#1)-(7pt,0)$) coordinate (a) -- ($(qc#2#1)+(7pt,0)$) coordinate (b);
  \draw[white] ($(qc#3#1)-(7pt,0)$) coordinate (c) -- ($(qc#3#1)+(7pt,0)$) coordinate (d);
  \draw (a) -- (d) (b) -- (c);
}

\def\con#1{%
  \pgfonlayer{background}
  \draw (qc\theqcircrow\theqcirccol) -- (qc#1\theqcirccol);
  \endpgfonlayer
}

\def\emp{\coordinate (qc\theqcircrow\theqcirccol) at (\theqcirccol,\theqcircrow);}



\pdfoutput=1


\begin{document}

\title{On quantum circuits employing roots of the Pauli matrices}
\author{Mathias Soeken}
\affiliation{Institute of Computer Science, University of Bremen, Germany}
\author{D. Michael Miller}
\affiliation{Department of Computer Science, University of Victoria, BC, Canada}
\author{Rolf Drechsler}
\affiliation{Institute of Computer Science, University of Bremen, Germany}

\begin{abstract}
  The Pauli matrices are a set of three $2\times 2$ complex Hermitian, unitary matrices.  In this article, we investigate the relationships between certain roots of the Pauli matrices and how gates implementing those roots
  are used in quantum circuits.  Techniques for simplifying such circuits are given.
  In particular, we show how those techniques can be used to find a circuit of
  Clifford+$T$ gates starting from a circuit composed of gates from the well studied NCV library.
\end{abstract}

\maketitle

\section{Introduction}
The well-studied NCV quantum gate library \cite{NC:2000} contains the gates: NOT~($X$),
controlled NOT~(CNOT), and both the single controlled square root of NOT as well as
its adjoint, denoted~$V$ and~$V^\dagger$, respectively.
In their seminal paper \cite{BBC+:1995},  Barenco \emph{et al}.~presented a
general result that as one instance shows how a classical reversible Toffoli
gate can be realized by five NCV gates as follows:
\begin{equation}
  \label{eq:barenco}
  \def\qcircscale{0.7}
  \qcircx=.8cm
  \begin{qcirc} \pc \cr \pc \cr \targ\con0 \end{qcirc}
  =
  \begin{qcirc}
    \emp  \> \pc        \> \emp \> \pc        \> \pc  \cr
    \pc   \> \targ\con0 \> \pc  \> \targ\con0 \> \emp \cr
    \g{V}\con1 \> \emp       \> \g{V^\dagger}\con1 \> \emp \> \g{V}\con0
  \end{qcirc}
\end{equation}
This result has been used in several works on synthesis and optimization
of quantum circuits~\cite{HSY+:2006,GWD+:2009,SM:2012}.

Another quantum gate library called Clifford+$T$~\cite{NC:2000}, which consists of
the controlled NOT, the phase gate $S$ and the Hadamard gate~$H$, plus the
$T$ gate
\begin{equation}
  \label{eq:t-gate}
  T = \left(\begin{array}{cc} 1 & 0 \\ 0 & e^{\frac{i\pi}{4}} \end{array}\right),
\end{equation}
and the adjoint gates $S^\dagger$ and $T^\dagger$ has also received considerable
attention.
The Clifford+$T$ library has the advantage over the NCV library
with respect to fault-tolerant computing~\cite{BCL+:2006}.

Recently different synthesis results based on the Clifford+$T$ gate library have
been presented~\cite{BS:2012,AMM+:2013,Sel:2013}.
One common aim in these works is to reduce the so-called
$T$-\emph{depth}, \emph{i.e.}~the number of $T$-\emph{stages} where each stage consists of one
or more~$T$ or~$T^\dagger$ gates that can operate simultaneously on separate
qubits.
In~\cite{AMM+:2013} the authors describe a search-based algorithm that finds
optimal circuit realizations with respect to their $T$-depth.
One of their circuits that realizes a Toffoli gate is~\cite[Fig.~13]{AMM+:2013}:
\begin{equation}
  \label{eq:amy-toffoli}
  \def\qcircscale{0.7}
  \qcircx=.8cm
  \begin{qcirc} \pc \cr \pc \cr \targ\con0 \end{qcirc}
  =
  \begin{qcirc}
  \emp \> \g{T} \> \targ \> \emp \> \pc \> \emp \> \pc \> \g{T^\dagger} \> \emp \> \pc \> \targ \cr
  \emp \> \g{T} \> \pc\con0 \> \targ \> \emp \> \g{T^\dagger} \> \targ\con0 \> \g{T^\dagger} \> \targ \> \emp \> \pc\con0 \cr
  \g{H} \> \g{T} \> \emp \> \pc\con1 \> \targ\con0 \> \emp \> \emp \> \g{T} \> \pc\con1 \> \targ\con0 \> \g{H}
  \draw[dashed] ($(qc04)-(.5,.5)$) rectangle ++(2,3);
  \end{qcirc}
\end{equation}
This circuit has a $T$-depth of~$3$ and a total depth of~$10$.
(Note that the gates surrounded by the dashed rectangle together have a depth
of~$1$ and are drawn in sequence only for clarity.)
The approach in \cite{AMM+:2013} produces optimal circuits but since the
technique's complexity is exponential, it is only applicable to small circuits.

The NOT operation is described by the Pauli $X$ matrix~\cite{CM:1929}.
Note that the controlled NOT could be drawn as a controlled $X$ but we use
the normal convention of a $\oplus$ as shown in the figures above.
The matrices for the $V$ and $V^\dagger$ operations as used in \eqref{eq:barenco}
are square roots of the Pauli $X$ matrix.
Similarly, the~$T$ gate operation is given by a matrix (\ref{eq:t-gate}) that is the fourth root
of the Pauli~$Z$ matrix.

The use of gates associated with different Pauli matrices and their roots within the same circuit, as illustrated in~\eqref{eq:amy-toffoli}, motivated us to explore the relations between  the Pauli
matrices and their roots, and to investigate how the associated gates can be used in constructing quantum circuits.
This article presents our findings and demonstrates their applicability in
\emph{deriving} the optimal circuits from~\cite{AMM+:2013} from known NCV circuits rather than using
exhaustive search techniques.

\section{Preliminaries}
The three \emph{Pauli matrices}~\cite{CM:1929} are given by
\begin{equation}
  \sigma_1 = \left(\begin{array}{cr} 0 & 1 \\ 1 & 0 \end{array}\right), \;
  \sigma_2 = \left(\begin{array}{cr} 0 & -i \\ i & 0 \end{array}\right), \;
  \sigma_3 = \left(\begin{array}{cr} 1 & 0 \\ 0 & -1 \end{array}\right)
\end{equation}
The alternate naming~$X=\sigma_1$, $Y=\sigma_2$, and $Z=\sigma_3$ is often used
and we use it whenever we refer to a specific Pauli matrix.

Matrices describing \emph{rotations} around the three axes of the Bloch sphere
are given by
\begin{equation}
  R_a(\theta)
  = \cos{\textstyle\frac{\theta}{2}}I-i\sin{\textstyle\frac{\theta}{2}}\sigma_a
\end{equation}
where~$a\in \{1,2,3\}$~with~$\theta$ being the rotation angle and~$I$ the
identity matrix~\cite{NC:2000}.
Each Pauli matrix specifies  a half turn $(180^{\circ})$ rotation around a particular axis up to a
global phase, \emph{i.e.}
\begin{equation}
  \label{eq:pauli-as-rotation}
  \sigma_a=e^{\frac{i\pi}{2}}R_a(\pi).
\end{equation}
The conjugate transpose of~$R_a(\theta)$ is found by negating the
angle~$\theta$ or by multiplying it by another Pauli matrix~$\sigma_b$ from both
sides, \emph{i.e.}
\begin{equation}
  \label{eq:rotation-dagger}
  R^\dagger_a(\theta)=R_a(-\theta)=\sigma_bR_a(\theta)\sigma_b
\end{equation}
where~$a\neq b$.  Note that it does not matter which of the two possible $\sigma_b$ is used.
Since~$\sigma_b$ is Hermitian, we also have
$R_a(\theta)=\sigma_bR_a^\dagger(\theta)\sigma_b$.

Rotation matrices are additive with respect to their angle,
\emph{i.e.}~$R_a(\theta_1)R_a(\theta_2)=R_a(\theta_1+\theta_2)$, so one can derive
the~$k^{\rm th}$ root of the Pauli matrices as well as their conjugate transpose
from~\eqref{eq:pauli-as-rotation} as
\begin{equation}
  \label{eq:roots-as-rotation}
  \sqrt[k]{\sigma_a}=e^{\frac{i\pi}{2k}}R_a\left(\textstyle\frac{\pi}{k}\right)
  \quad
  \text{and}
  \quad
  \sqrt[k]{\sigma_a}^\dagger
  =
  e^{-\frac{i\pi}{2k}}R^\dagger_a\left(\textstyle\frac{\pi}{k}\right).
\end{equation}
For brevity, we term these matrices the \emph{Pauli roots}.

Using~\eqref{eq:roots-as-rotation} the rotation matrices can also be
expressed in terms of the roots of the Pauli matrices, \emph{i.e.}
\begin{equation}
  \label{eq:rotation-as-roots}
  R_a\left(\textstyle\frac{\pi}{k}\right)=e^{-\frac{i\pi}{2k}}\sqrt[k]{\sigma_a}
  \quad\text{and}\quad
  R^\dagger_a\left(\textstyle\frac{\pi}{k}\right)=
  e^{\frac{i\pi}{2k}}\sqrt[k]{\sigma_a}^\dagger.
\end{equation}
Consequently, we can derive
\begin{equation}
  \label{eq:root-as-dagger}
  \begin{array}{rcl}
  \sqrt[k]{\sigma_a}
  & \stackrel{\mathrm{(\ref{eq:roots-as-rotation})}}{=} &
  e^{\frac{i\pi}{2k}}R_a\left(\textstyle\frac{\pi}{k}\right)
  \stackrel{\mathrm{(\ref{eq:rotation-dagger})}}{=}
  e^{\frac{i\pi}{2k}}\sigma_bR^\dagger_a(\theta)\sigma_b \\
  & \stackrel{\mathrm{(\ref{eq:rotation-as-roots})}}{=} &
  e^{\frac{i\pi}{k}}\sigma_b\sqrt[k]{\sigma_a}^\dagger\sigma_b
  \end{array}
\end{equation}
and analogously~$\sqrt[k]{\sigma_a}^\dagger=
e^{-\frac{i\pi}{k}}\sigma_b\sqrt[k]{\sigma_a}\sigma_b$ for~$a\neq b$.
This last equation will be a key element in showing how control lines can be
removed from controlled gates by using roots of higher degree.
A number in brackets above an equal sign indicates the identity applied.

\emph{Translation} from one Pauli matrix to another is given by
\begin{equation}
  \label{eq:pauli-to-pauli}
  \sigma_a=\rho_{ab}\sigma_b\rho_{ab}
\end{equation}
where
\begin{equation}
  \label{eq:translation-matrix}
  \begin{aligned}
    \rho_{ab}=\rho_{ba}&=\tfrac{1}{\sqrt2}(\sigma_a+\sigma_b) \\
    &=e^{\frac{i\pi}{2}}R_a\left(\tfrac{\pi}{2}\right)
    R_b\left(\tfrac{\pi}{2}\right)
    R_a\left(\tfrac{\pi}{2}\right)
  \end{aligned}
\end{equation}
with~$a\neq b$ are \emph{translation matrices}.
Further, we define~$\rho_{aa}=I$.
Eq.~\eqref{eq:pauli-to-pauli} can be extended to the Pauli roots giving
\begin{equation}
  \label{eq:pauli-root-to-pauli-root}
  \sqrt[k]{\sigma_a}=\rho_{ab}\sqrt[k]{\sigma_b}\rho_{ab}.
\end{equation}

The well-known \emph{Hadamard} operator is given by the normalized
$2\times2$ matrix
\begin{equation}
  \label{eq:hadamard}
  H=\frac{1}{\sqrt{2}}\left(\begin{array}{cr} 1 & 1 \\ 1 & -1 \end{array}\right)
\end{equation}
which can also be expressed as
\begin{equation}
  \label{eq:H-as-lin-combination}
  H=\frac{1}{\sqrt{2}}(X+Z)
\end{equation}
or in terms of rotation matrices as
\begin{equation}
  \label{eq:H-as-rotations}
  H=e^{\frac{i\pi}{2}}
  R_1\left(\tfrac{\pi}{2}\right)
  R_3\left(\tfrac{\pi}{2}\right)
  R_1\left(\tfrac{\pi}{2}\right).
\end{equation}
In fact, $H = \rho_{1\,2} = \rho_{2\,1}$ so the Hadamard operator can be used to
translate between~$X$ and~$Z$, \emph{i.e.}
\begin{equation}
  \label{eq:x-to-z}
  Z=HXH \quad\text{and}\quad X=HZH
\end{equation}

Given a unitary~$2^n\times 2^n$ matrix~$U$, called the \emph{target operation}, we
define four controlled operations
\begin{equation}
  \label{eq:controlled-gates-pictures}
  \begin{qcirc} \pc \cr \g{U}\con0 \end{qcirc},\;
  \begin{qcirc} \g{U} \cr \pc\con0 \end{qcirc},\;
  \begin{qcirc} \nc \cr \g{U}\con0 \end{qcirc},\;\text{ and }\;
  \begin{qcirc} \g{U} \cr \nc\con0 \end{qcirc}
\end{equation}
which are described by
\begin{equation}
  \label{eq:positive-control}
  \begin{array}{l}
    C_1(U)=\|0>\!\<0|\otimes I_{2^n}+\|1>\!\<1|\otimes U, \\[3pt]
    C_2(U)=I_{2^n}\otimes\|0>\!\<0|+U\otimes\|1>\!\<1|, \\[3pt]
    C_1^-(U)=\|0>\!\<0|\otimes U+\|1>\!\<1|\otimes I_{2^n}, \text{ and}\\[3pt]
    C_2^-(U)=U\otimes\|0>\!\<0|+I_{2^n}\otimes\|1>\!\<1|,
  \end{array}
\end{equation}
respectively,
where~$\otimes$ denotes the Kronecker product,~$I_{2^n}$ denotes the
$2^n\times2^n$ identity matrix, and~$\|\cdot>$ and~$\<\cdot|$ is Dirac's bra-ket
notation~\cite{Dirac:1939}.
The latter two operations are referred to as negative controlled operations.
As examples, the CNOT gate is~$C_1(X)$ and the Toffoli gate is~$C_1(C_1(X))$.
Notice that adjacent controlled operations with the same control preserve
matrix multiplication, \emph{i.e.}
\begin{equation}
  \label{eq:matrix-multiplication-controlled}
  \def\qcircscale{0.9}
  \qcircx=.8cm
  \begin{qcirc} \pc \> \pc \cr \g{U_2}\con0 \> \g{U_1}\con0 \end{qcirc}=
  \begin{qcirc} \pc \cr \g{U_1U_2}\con0 \end{qcirc}\quad\text{and}\quad
  \begin{qcirc} \nc \> \nc \cr \g{U_2}\con0 \> \g{U_1}\con0 \end{qcirc}=
  \begin{qcirc} \nc \cr \g{U_1U_2}\con0 \end{qcirc}
\end{equation}
and gates with opposite control polarities commute, \emph{i.e.}
\begin{equation}
  \label{eq:matrix-commute-controlled}
  \def\qcircscale{0.9}
  \qcircx=.8cm
  \begin{qcirc} \pc \> \nc \cr \g{U_2}\con0 \> \g{U_1}\con0 \end{qcirc}=
  \begin{qcirc} \nc \> \pc \cr \g{U_1}\con0 \> \g{U_2}\con0 \end{qcirc}
\end{equation}
Finally, if the same target operation is controlled with both polarities, the target
operation does not need be controlled, \emph{i.e.}
\begin{equation}
  \label{eq:matrix-both}
  \def\qcircscale{0.9}
  \qcircx=.8cm
  \begin{qcirc} \pc \> \nc \cr \g{U}\con0 \> \g{U}\con0 \end{qcirc}=
  \begin{qcirc} \nc \> \pc \cr \g{U}\con0 \> \g{U}\con0 \end{qcirc}=
  \begin{qcirc} \emp \cr \g{U} \end{qcirc}
\end{equation}
The identities in~\eqref{eq:matrix-multiplication-controlled}--\eqref{eq:matrix-both}
hold for~$C_2$ and~$C_2^-$ analogously.
Eq.~\eqref{eq:matrix-commute-controlled} describes a circuit which performs
one of two different target operations depending on the value assigned to the
control line.
Let us consider the special case
\begin{equation}
  \label{eq:case-matrix}
  C_1^-(U_1)C_1(U_2)=\|0>\!\<0|\otimes U_1+\|1>\!\<1|\otimes U_2
\end{equation}
which performs~$U_1$ or~$U_2$ if the control line is negative or positive,
respectively.
We call this matrix a~\emph{case gate} and employ a special gate representation shown on the left-hand side of:
\begin{equation}
  \label{eq:case-gate}
  \begin{qcirc} \casec \cr \caseg{U_1}{U_2}\con0 \end{qcirc}
  =
  \begin{qcirc} \nc \> \pc \cr \g{U_1}\con0 \> \g{U_2}\con0 \end{qcirc}
\end{equation}

Another interesting circuit equality applies to the Pauli~$Z$
gate and its roots.
A positively controlled gate can be flipped without changing its functionality,
\emph{i.e.}
\begin{equation}
  \label{eq:switch-z-roots}
  C_1(\sqrt[k]{Z})=
  \begin{qcirc} \pc \cr \g{\sqrt[k]{Z}}\con0 \end{qcirc}=
  \begin{qcirc} \g{\sqrt[k]{Z}}\cr \pc\con0 \end{qcirc}=
  C_2(\sqrt[k]{Z}).
\end{equation}
Using the translation gates, a circuit identity for the other
Pauli roots can be derived:
\begin{displaymath}
  \begin{qcirc} \pc \cr \g{\sqrt[k]{\sigma_a}}\con0 \end{qcirc}
  \stackrel{(\ref{eq:pauli-root-to-pauli-root})}{=}
  \begin{qcirc} \emp \> \pc \> \emp \cr \g{\rho_{a3}} \> \g{\sqrt[k]{Z}}\con0 \> \g{\rho_{a3}} \end{qcirc}
\end{displaymath}
\begin{equation}
  \label{eq:turn-pauli-with-hadamard}
  \stackrel{(\ref{eq:switch-z-roots})}{=}
  \begin{qcirc} \emp \> \g{\sqrt[k]{Z}} \> \emp \cr \g{\rho_{a3}} \> \pc\con0 \> \g{\rho_{a3}} \end{qcirc}
  \stackrel{(\ref{eq:pauli-root-to-pauli-root})}{=}
  \begin{qcirc} \g{\rho_{a3}} \> \g{\sqrt[k]{\sigma_a}} \> \g{\rho_{a3}} \cr \g{\rho_{a3}} \> \pc\con0 \> \g{\rho_{a3}} \end{qcirc}
\end{equation}
where~$\rho_{3\,3}=I$.

Since~$\sqrt[k]{Z}\otimes I=C_1(e^{\frac{i\pi}{k}}I)$
and
\[C_1(U)C_1(e^{\frac{i\pi}{k}}I)=C_1(e^{\frac{i\pi}{k}}U)=C_1(e^{\frac{i\pi}{k}}I)C_1(U)\]
a root of the Pauli $Z$ gate can be moved across a positive control, \emph{i.e.}
\begin{equation}
  \label{eq:move-t-over-control}
  \begin{qcirc} \g{\sqrt[k]{Z}} \> \pc \cr \> \g{U}\con0 \end{qcirc}
  =
  \begin{qcirc} \pc \> \g{\sqrt[k]{Z}} \cr \g{U}\con0 \> \end{qcirc}
\end{equation}

Lastly, in the remainder of the paper we will often make use of a circuit
identity given in~\cite[Rule D7]{ST:2013}:
\begin{equation}
  \label{eq:cnot-rule}
  \def\qcircscale{0.8}
  \qcircx=.8cm
  \begin{qcirc} \pc \> \cr \targ\con0 \> \pc \cr \> \targ\con1 \end{qcirc}
  =
  \begin{qcirc} \> \pc \> \pc \cr \pc \> \targ\con0 \> \cr \targ\con1 \> \> \targ\con0 \end{qcirc}
\end{equation}

\section{Mapping schemes for synthesis}
Many of the mappings for translating a given quantum circuit into a circuit
in terms of a restricted gate library in fact reduce the
number of controls in a controlled gate, as shown for example in \eqref{eq:barenco}.
We follow this approach and will provide generic rules in terms of general Pauli roots.

First, we show how a single controlled Pauli root~$\sqrt[k]{\sigma_a}$ gate can be
mapped to a circuit involving uncontrolled Pauli root gates by doubling the index,
\emph{i.e.}~$\sqrt[2k]{\sigma_a}$.
We start by proving the following lemma.
\begin{lemma}
  \label{lemma:case-circuit}
  The following circuit equality holds for~$a\neq b$:
\begin{equation}
  \label{eq:lemma-case-circuit}
  \begin{qcirc} \casec \cr
         \caseg{\sqrt[k]{\sigma_a}^\dagger}{\sqrt[k]{\sigma_a}} \con0
  \qcircadjl=.5cm
  \qcircadjr=.5cm
  \end{qcirc}
  =
  \begin{qcirc}
    \pc \> \g{\root k\of Z} \> \pc \cr
    \g{\sigma_b}\con0 \> \g{\sqrt[k]{\sigma_a}^\dagger} \> \g{\sigma_b}\con0
  \end{qcirc}
\end{equation}
\end{lemma}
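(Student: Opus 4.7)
The plan is to verify the circuit identity algebraically by expanding both sides as block matrices and reducing. Using \eqref{eq:case-matrix}, the left-hand side is immediately
\begin{equation*}
\|0>\!\<0|\otimes\sqrt[k]{\sigma_a}^\dagger + \|1>\!\<1|\otimes\sqrt[k]{\sigma_a}.
\end{equation*}
For the right-hand side, I read the circuit right-to-left as the matrix product $C_1(\sigma_b)\,(\sqrt[k]{Z}\otimes\sqrt[k]{\sigma_a}^\dagger)\,C_1(\sigma_b)$, with each controlled-$\sigma_b$ gate expanded via \eqref{eq:positive-control} as $\|0>\!\<0|\otimes I+\|1>\!\<1|\otimes\sigma_b$.

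Expanding the triple product produces four terms, but the orthogonality $\|i>\!\<i|\cdot\|j>\!\<j|=\delta_{ij}\|i>\!\<i|$ kills the off-diagonal contributions, leaving only two. The key observation is that $\sqrt[k]{Z}=\mathrm{diag}(1,e^{i\pi/k})$ acts on the two diagonal projectors as $\|0>\!\<0|\sqrt[k]{Z}=\|0>\!\<0|$ and $\|1>\!\<1|\sqrt[k]{Z}=e^{i\pi/k}\|1>\!\<1|$, so the $\|0>$-branch picks up no phase while the $\|1>$-branch acquires a factor $e^{i\pi/k}$. Collecting these, the right-hand side simplifies to
\begin{equation*}
\|0>\!\<0|\otimes\sqrt[k]{\sigma_a}^\dagger + e^{i\pi/k}\|1>\!\<1|\otimes\sigma_b\sqrt[k]{\sigma_a}^\dagger\sigma_b.
\end{equation*}

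The last step is to apply \eqref{eq:root-as-dagger}, which gives $\sqrt[k]{\sigma_a}=e^{i\pi/k}\sigma_b\sqrt[k]{\sigma_a}^\dagger\sigma_b$ whenever $a\neq b$. The phase $e^{i\pi/k}$ accumulated from $\sqrt[k]{Z}$ cancels exactly against the one appearing in \eqref{eq:root-as-dagger}, so the second tensor factor collapses to $\sqrt[k]{\sigma_a}$ and the two sides agree.

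The main obstacle is purely bookkeeping: tracking the global phase $e^{i\pi/k}$ coming from $\sqrt[k]{Z}$ and matching it against the phase hidden inside \eqref{eq:root-as-dagger}. Note that the hypothesis $a\neq b$ is essential precisely so that \eqref{eq:root-as-dagger} applies; the lemma would fail for $a=b$ since then conjugation by $\sigma_b=\sigma_a$ does not turn $\sqrt[k]{\sigma_a}^\dagger$ into $\sqrt[k]{\sigma_a}$.
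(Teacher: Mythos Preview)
Your proof is correct and follows essentially the same logic as the paper's: both exploit the diagonal structure of $\sqrt[k]{Z}$ to split the middle column into a control-$\|0>$ branch (no phase) and a control-$\|1>$ branch (phase $e^{i\pi/k}$), then invoke \eqref{eq:root-as-dagger} to identify $e^{i\pi/k}\sigma_b\sqrt[k]{\sigma_a}^\dagger\sigma_b$ with $\sqrt[k]{\sigma_a}$. The only difference is presentational---you carry out the computation directly in block-matrix/Dirac notation, whereas the paper rewrites the same steps as a chain of circuit identities using \eqref{eq:matrix-multiplication-controlled} and \eqref{eq:matrix-commute-controlled}.
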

\begin{proof}
  Based on the special structure
of~$\sqrt[k]{Z}=\begin{pmatrix}1 & 0 \\ 0 & e^{\frac{i\pi}{k}}\end{pmatrix}$, one
obtains
\[
  \def\qcircscale{.8}
  \qcircy=.8cm
  \begin{qcirc}
    \pc \> \g{\sqrt[k]{Z}} \> \pc \cr
    \g{\sigma_b}\con0 \> \g{\sqrt[k]{\sigma_a}^\dagger} \> \g{\sigma_b}\con0
  \end{qcirc}
  \stackrel{(\ref{eq:roots-as-rotation})}{=}
  \begin{qcirc}
    \pc \> \nc \> \emp \> \pc \> \emp \> \pc \cr
    \g{\sigma_b}\con0 \> \g{\sqrt[k]{\sigma_a}^\dagger}\con0 \>
    \emp \> \g{e^{\frac{i\pi}{k}}\sqrt[k]{\sigma_a}^\dagger}\con0 \>
    \emp \> \g{\sigma_b}\con0
  \end{qcirc}
\]
\[
  \def\qcircscale{.8}
  \qcircy=.8cm
  \stackrel{~(\ref{eq:matrix-multiplication-controlled},\ref{eq:matrix-commute-controlled})}{=}
  \begin{qcirc}
    \nc \> \emp \> \pc \cr
    \g{\sqrt[k]{\sigma_a}^\dagger}\con0 \>
    \emp \> \g{\sigma_be^{\frac{i\pi}{k}}\sqrt[k]{\sigma_a}^\dagger\sigma_b}\con0
  \qcircadjl=.25cm
  \qcircadjr=.85cm
  \end{qcirc}
  \stackrel{~(\ref{eq:root-as-dagger})}{=}
  \begin{qcirc}
    \nc \> \emp \> \pc \cr
    \g{\sqrt[k]{\sigma_a}^\dagger}\con0 \> \emp \>
    \g{\sqrt[k]{\sigma_a}}\con0
  \qcircadjl=.25cm
  \qcircadjr=.25cm
  \end{qcirc}
\]
which proves~\eqref{eq:lemma-case-circuit}.
\end{proof}

\begin{corollary}
  Analogously to Lemma~\ref{lemma:case-circuit} we can derive
\begin{equation}
  \label{eq:lemma-case-circuit2}
  \begin{qcirc} \casec \cr
         \caseg{\sqrt[k]{\sigma_a}}{\sqrt[k]{\sigma_a}^\dagger} \con0
  \qcircadjl=.5cm
  \qcircadjr=.5cm
  \end{qcirc}
  =
  \begin{qcirc}
    \pc \> \g{\sqrt[k]{Z}^\dagger} \> \pc \cr
    \g{\sigma_b}\con0 \> \g{\sqrt[k]{\sigma_a}} \> \g{\sigma_b}\con0
  \end{qcirc}
\end{equation}
\end{corollary}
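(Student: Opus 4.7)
The plan is to derive \eqref{eq:lemma-case-circuit2} as the Hermitian adjoint of \eqref{eq:lemma-case-circuit}, since both sides of an equation of unitaries can be daggered. Interpreting the left-hand side of \eqref{eq:lemma-case-circuit} as $C_1^-(\sqrt[k]{\sigma_a}^\dagger)C_1(\sqrt[k]{\sigma_a})$ in the notation of \eqref{eq:case-matrix}, and noting that these two controlled gates commute by \eqref{eq:matrix-commute-controlled}, its adjoint is $C_1^-(\sqrt[k]{\sigma_a})C_1(\sqrt[k]{\sigma_a}^\dagger)$, which is exactly the case gate on the left-hand side of \eqref{eq:lemma-case-circuit2}.

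For the right-hand side, I would read the circuit as the product $C_1(\sigma_b)\cdot(\sqrt[k]{Z}\otimes\sqrt[k]{\sigma_a}^\dagger)\cdot C_1(\sigma_b)$. Taking the adjoint reverses the product order and daggers each factor. Since $\sigma_b$ is Hermitian, $C_1(\sigma_b)^\dagger = C_1(\sigma_b)$, so the two outer factors are unchanged, while the middle tensor factor becomes $\sqrt[k]{Z}^\dagger\otimes\sqrt[k]{\sigma_a}$. This yields $C_1(\sigma_b)\cdot(\sqrt[k]{Z}^\dagger\otimes\sqrt[k]{\sigma_a})\cdot C_1(\sigma_b)$, which is precisely the right-hand side of \eqref{eq:lemma-case-circuit2}. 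Equating the adjoints of the two sides of \eqref{eq:lemma-case-circuit} then gives the desired identity.

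There is no substantive obstacle here; the essential observations are the self-adjointness of $C_1(\sigma_b)$ and the palindromic placement of the outer gates, which together make the adjoint act cleanly on the circuit diagram. As a self-contained alternative that avoids invoking adjoints of the lemma, one could mirror the proof of Lemma~\ref{lemma:case-circuit} step for step, using the fact that $\sqrt[k]{Z}^\dagger$ is diagonal with entries $1$ and $e^{-i\pi/k}$ to split the middle portion into $C_1^-(\sqrt[k]{\sigma_a})\cdot C_1(e^{-i\pi/k}\sqrt[k]{\sigma_a})$, commuting the negative-control factor past $C_1(\sigma_b)$ via \eqref{eq:matrix-commute-controlled}, fusing the remaining positive controls via \eqref{eq:matrix-multiplication-controlled}, and finally applying the conjugate identity $\sqrt[k]{\sigma_a}^\dagger = e^{-i\pi/k}\sigma_b\sqrt[k]{\sigma_a}\sigma_b$ stated right after \eqref{eq:root-as-dagger} to rewrite the fused target as $\sqrt[k]{\sigma_a}^\dagger$.
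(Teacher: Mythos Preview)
Your proposal is correct. The paper itself gives no explicit proof beyond the word ``analogously,'' which suggests rerunning the steps of Lemma~\ref{lemma:case-circuit} with the roles of the root and its adjoint interchanged; this is precisely the alternative you sketch at the end. Your primary argument via taking the Hermitian adjoint of both sides of~\eqref{eq:lemma-case-circuit} is a genuinely different and more economical route: rather than repeating the four-step manipulation of the lemma, you obtain the corollary in one stroke from the self-adjointness of $C_1(\sigma_b)$ and the palindromic structure of the right-hand side. The paper's implicit approach has the minor advantage of being self-contained (it does not rely on the lemma already being established), but your adjoint argument makes the relationship between the two identities transparent and avoids any duplication of effort.
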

Lemma~\ref{lemma:case-circuit} can be used to prove the following theorem that shows how
to remove a control line from a controlled Pauli root operation.
\begin{theorem}
  \label{theo:remove-one-line}
If~$a\neq b$, then the following circuit equality holds:
\begin{equation}
  \label{eq:theo-remove-one-line}
  \begin{qcirc}
  \pc \cr \g{\sqrt[k]{\sigma_a}} \con0
  \qcircadjl=.25cm
  \qcircadjr=.25cm
  \end{qcirc}
  =
  \begin{qcirc}
    \> \pc \> \g{\sqrt[2k]{Z}} \> \pc \cr
    \g{\sqrt[2k]{\sigma_a}} \> \g{\sigma_b}\con0 \>
    \g{\sqrt[2k]{\sigma_a}^\dagger} \> \g{\sigma_b}\con0
    \qcircadjl=.25cm
  \end{qcirc}
\end{equation}
\end{theorem}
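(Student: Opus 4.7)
The plan is to recognize the three rightmost gates on the RHS of~\eqref{eq:theo-remove-one-line} as an instance of the case-circuit identity of Lemma~\ref{lemma:case-circuit} with the index $k$ replaced by $2k$. That sub-circuit, namely the block $C_1(\sigma_b)\,\bigl(\sqrt[2k]{Z}\otimes\sqrt[2k]{\sigma_a}^\dagger\bigr)\,C_1(\sigma_b)$ (with $\sqrt[2k]{Z}$ on the control wire and $\sqrt[2k]{\sigma_a}^\dagger$ on the target), is visually identical to the right-hand side of~\eqref{eq:lemma-case-circuit} after the substitution $k\mapsto 2k$. So the first step is to invoke Lemma~\ref{lemma:case-circuit} to collapse it into a single case gate that applies $\sqrt[2k]{\sigma_a}^\dagger$ on the target when the control wire carries~$0$ and $\sqrt[2k]{\sigma_a}$ when it carries~$1$.

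Once this rewrite has been made, the remaining work is to combine the resulting case gate with the initial uncontrolled $\sqrt[2k]{\sigma_a}$ sitting on the target wire at the left of the circuit. Using the additivity of the rotation angles --- equivalently, root multiplication via~\eqref{eq:roots-as-rotation} --- the $0$-branch of the case gate becomes $\sqrt[2k]{\sigma_a}^\dagger\sqrt[2k]{\sigma_a}=I$, while the $1$-branch becomes $\sqrt[2k]{\sigma_a}\sqrt[2k]{\sigma_a}=\sqrt[k]{\sigma_a}$. The composite circuit therefore acts as the identity on the target when the control is $0$ and as $\sqrt[k]{\sigma_a}$ when the control is $1$, i.e.\ precisely the controlled operation $C_1(\sqrt[k]{\sigma_a})$ on the left-hand side.

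There is no substantial obstacle once Lemma~\ref{lemma:case-circuit} is in hand; the only bookkeeping required is to track the factor of two in the root indices (the RHS uses $2k$-th roots, and two of them multiply on the target to yield the desired $k$-th root). The $\sqrt[2k]{Z}$ that sits on the control wire between the two $C_1(\sigma_b)$ gates could, if one wished, be freely commuted through those controls via~\eqref{eq:move-t-over-control}, but it is cleaner to leave it in place so that the pattern of Lemma~\ref{lemma:case-circuit} is visible and may be applied directly.
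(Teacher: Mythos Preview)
Your proposal is correct and follows essentially the same route as the paper: apply Lemma~\ref{lemma:case-circuit} (with $k\mapsto 2k$) to the three rightmost gates, then absorb the leading uncontrolled $\sqrt[2k]{\sigma_a}$ into the two branches of the resulting case gate to obtain $I$ on the $0$-branch and $\sqrt[k]{\sigma_a}$ on the $1$-branch. The paper carries out that absorption slightly more formally by first splitting the uncontrolled gate via~\eqref{eq:matrix-both} and then merging with~\eqref{eq:matrix-multiplication-controlled} and~\eqref{eq:matrix-commute-controlled}, but this is exactly your branch-by-branch argument spelled out.
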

\begin{proof}
We have
\[
  \def\qcircscale{.8}
  \qcircy=.8cm
  \qcircx=1.1cm
  \begin{qcirc}
    \> \pc \> \g{\sqrt[2k]{Z}} \> \pc \cr
    \g{\sqrt[2k]{\sigma_a}} \> \g{\sigma_b}\con0 \>
    \g{\sqrt[2k]{\sigma_a}^\dagger} \> \g{\sigma_b}\con0
    \qcircadjl=.25cm
  \end{qcirc}
  \stackrel{\eqref{eq:lemma-case-circuit}}{=}
  \qcircx=.7cm
  \begin{qcirc}
      \emp \> \emp \> \nc \> \emp \> \pc \cr
      \g{\sqrt[2k]{\sigma_a}} \> \emp \>
      \g{\sqrt[2k]{\sigma_a}^\dagger}\con0 \> \emp \>
      \g{\sqrt[2k]{\sigma_a}}\con0
      \qcircadjl=.25cm
      \qcircadjr=.25cm
  \end{qcirc}
\]
\[
  \def\qcircscale{.8}
  \qcircy=.8cm
  \qcircx=.7cm
  \stackrel{(\ref{eq:matrix-both})}{=}
  \begin{qcirc}
      \nc \> \emp \> \pc \> \emp \> \nc \> \emp \> \pc \cr
      \g{\sqrt[2k]{\sigma_a}}\con0 \> \emp \> \g{\sqrt[2k]{\sigma_a}}\con0
      \> \emp \>
      \g{\sqrt[2k]{\sigma_a}^\dagger}\con0 \> \emp \>
      \g{\sqrt[2k]{\sigma_a}}\con0
      \qcircadjl=.25cm
      \qcircadjr=.25cm
  \end{qcirc}
\]
\[
  \def\qcircscale{.8}
  \qcircy=.8cm
  \qcircx=1.2cm
  \stackrel{~(\ref{eq:matrix-multiplication-controlled},\ref{eq:matrix-commute-controlled})}{=}
  \begin{qcirc}
      \nc \> \emp \> \pc \cr
      \g{\sqrt[2k]{\sigma_a}\sqrt[2k]{\sigma_a}^\dagger}\con0
      \> \emp \>
      \g{\sqrt[2k]{\sigma_a}\sqrt[2k]{\sigma_a}}\con0
      \qcircadjl=.95cm
      \qcircadjr=.95cm
  \end{qcirc}
\]
\[
  \def\qcircscale{.8}
  \qcircy=.8cm
  =
  \begin{qcirc}
    \nc \> \pc \cr
    \g{I}\con0 \> \g{\root k\of{\sigma_a}}\con0
    \qcircadjr=.25cm
  \end{qcirc}
  =
  \begin{qcirc}
    \pc \cr
    \g{\root k\of{\sigma_a}}\con0
    \qcircadjl=.25cm
    \qcircadjr=.25cm
  \end{qcirc}
\]
which concludes the proof.
\end{proof}
It follows from the proof that the first gate on the right-hand side of~(\ref{eq:theo-remove-one-line})
commutes with the other three gates treated as a single group.
Using Theorem~\ref{theo:remove-one-line} we can for example derive the identity shown
in~\cite[Fig.~5(b)]{AMM+:2013} where~$S=\sqrt{Z}$:
\[
  \def\qcircscale{.7}
  \qcircx=.68cm
  \begin{qcirc} \pc \cr \g{S}\con0 \end{qcirc}
  \!\stackrel{(\ref{eq:switch-z-roots})}{=}\!
  \begin{qcirc} \g{S} \cr \pc\con0 \end{qcirc}
  \!\stackrel{(\ref{eq:theo-remove-one-line})}{=}\!
  \begin{qcirc} \targ \> \g{T^\dagger} \> \targ \> \g{T} \cr
         \pc\con0 \> \g{T} \> \pc\con0 \> \emp
  \end{qcirc}
  \!\stackrel{\eqref{eq:move-t-over-control}}{=}\!
  \begin{qcirc} \targ \> \g{T^\dagger} \> \targ \> \g{T} \cr
         \pc\con0 \> \emp \> \pc\con0 \> \g{T}
  \end{qcirc}
\]
Note that a root of the Pauli~$Z$ gate can be moved across the control of
a CNOT but not over the target.

In~\cite[Lemma 6.1]{BBC+:1995} the circuit identity of which
\eqref{eq:barenco} is a particular case is defined for an arbitrary unitary
matrix~$U$ on the left-hand side and its square root on the right-hand side.
Given that Lemma and the translation gate equality in
\eqref{eq:pauli-root-to-pauli-root} we can define a new generic mapping that involves the Pauli roots.
\begin{theorem}
  \label{theo:barenco-extended}
  The following circuit equality holds:
\begin{equation}
  \label{eq:theo-barenco-extended}
  \def\qcircscale{.79}
  \begin{qcirc}
    \pc \cr \pc \cr \g{\sqrt[k]{\sigma_a}}\con0
    \qcircadjl=.25cm
    \qcircadjr=.25cm
  \end{qcirc}
  =
  \begin{qcirc}
    \emp \> \emp  \> \pc        \> \emp \> \pc        \> \pc  \> \emp \cr
    \emp \> \pc   \> \targ\con0 \> \pc  \> \targ\con0 \> \emp \> \emp \cr
    \g{\rho_{ab}} \>
    \g{\sqrt[2k]{\sigma_b}}\con1 \> \emp \>
    \g{\sqrt[2k]{\sigma_b}^\dagger}\con1 \> \emp \>
    \g{\sqrt[2k]{\sigma_b}}\con0 \>
    \g{\rho_{ab}}
    \qcircadjl=.25cm
    \qcircadjr=.25cm
  \end{qcirc}
\end{equation}
where~$\rho_{ab}=I$ if~$a=b$.
\end{theorem}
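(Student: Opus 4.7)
The plan is to reduce the theorem to Barenco's Lemma 6.1 of~\cite{BBC+:1995}, the generic square-root construction of which (\ref{eq:barenco}) is the special case $U=X$, by conjugating the target operation with the translation gates $\rho_{ab}$.

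First I would use the Pauli-root translation identity (\ref{eq:pauli-root-to-pauli-root}) to write $\sqrt[k]{\sigma_a}=\rho_{ab}\sqrt[k]{\sigma_b}\rho_{ab}$. Since $\rho_{ab}$ is its own inverse (immediate from its definition in (\ref{eq:translation-matrix}) together with the fact that distinct Pauli matrices anticommute), the outer $\rho_{ab}$ factors can be pulled out of the doubly controlled gate as unconditional operations on the target wire: on every computational-basis branch of the two controls the lifted pair either collapses to $\rho_{ab}^2=I$ (inactive branches) or bookends the operative $\sqrt[k]{\sigma_b}$ (active branch). This yields the circuit identity
\begin{displaymath}
  C_1(C_1(\sqrt[k]{\sigma_a})) \;=\; (I\otimes I\otimes \rho_{ab})\, C_1(C_1(\sqrt[k]{\sigma_b}))\, (I\otimes I\otimes \rho_{ab}),
\end{displaymath}
placing the two $\rho_{ab}$ gates exactly where they appear on the target wire of the right-hand side of (\ref{eq:theo-barenco-extended}).

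Next I would apply Barenco's Lemma 6.1 directly to the inner block $C_1(C_1(\sqrt[k]{\sigma_b}))$, using that $\sqrt[2k]{\sigma_b}$ is a square root of $\sqrt[k]{\sigma_b}$. This expands the doubly controlled gate into the familiar five-gate pattern (two singly-controlled roots, one singly-controlled adjoint root, and two CNOTs in the Barenco arrangement) that matches the middle columns of the diagram. The corner case $a=b$ is immediate since $\rho_{aa}=I$ by definition, so the conjugation is trivial and the identity collapses to a direct application of Barenco's lemma.

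The main obstacle is justifying the commutation of the unconditional $\rho_{ab}$ gates through the doubly controlled operation; this hinges on $\rho_{ab}$ being an involution, which makes the lift-out step lossless. Everything else in the argument is substitution followed by the citation of Barenco, so once the involutivity observation is in place the theorem falls out immediately.
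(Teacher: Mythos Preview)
Your proposal is correct and matches the paper's own proof: both rest on Barenco's Lemma~6.1, the translation identity~(\ref{eq:pauli-root-to-pauli-root}), and the involutivity of $\rho_{ab}$. The only cosmetic difference is order---you conjugate by $\rho_{ab}$ first and then apply Barenco, whereas the paper applies Barenco first and then inserts $\rho_{ab}$ pairs around each controlled root, cancelling the interior ones---but this is the same argument.
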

\begin{proof}
  Considering the circuit one gets from~\cite[Lemma 6.1]{BBC+:1995},
  the Pauli roots are adjusted by inserting translation gates around each of them.
  Since the translation gates are Hermitian they are involutory and therefore
  two translation gates in sequence cancel.
\end{proof}
One obtains~\eqref{eq:barenco} by setting~$a=b=k=1$ in~\eqref{eq:theo-barenco-extended}.
Note that the circuit identities in both \eqref{eq:theo-remove-one-line} and
\eqref{eq:theo-barenco-extended} can be extended by adding a control to the
left-hand side and controls to particular gates on the right-hand side.
Repeating this process enables the mapping of multiple controlled
gates.
As the above illustrates, one needs to double the degree of the root in order to remove
one control line when not using ancilla lines.
As a result, one needs operations~$\sqrt[2^n]{\sigma_a}$ in order to represent an
$n$-controlled Toffoli gate if the only allowed controlled gate is a CNOT.

By combining \eqref{eq:theo-barenco-extended} with \eqref{eq:theo-remove-one-line}
a powerful mapping scheme can be obtained for
Toffoli gates that is very flexible due to the number of degrees of freedom:
\begin{enumerate}
\itemsep0pt
\item In \eqref{eq:theo-remove-one-line} we can move the first gate to the end of the circuit.
\item In \eqref{eq:theo-remove-one-line} the Pauli $Z$ root can be moved
  over the controls on the upper line.
\item In \eqref{eq:theo-remove-one-line} the $C_1(\sigma_b)$ operations can
  be replaced by~$C_2(\sigma_b)$ operations, if~$b=3$.
\item In \eqref{eq:theo-barenco-extended} the first two lines can be
  swapped.
\item In \eqref{eq:theo-barenco-extended} the functionality is preserved if
  all controlled Pauli roots are replaced by their adjoint..
\item The last controlled Pauli root in \eqref{eq:theo-barenco-extended}
  can be freely moved within the $\rho_{ab}$ gates.
\end{enumerate}

By exploiting these options, it is possible to derive a $T$ gate realization for
a Toffoli gate of the same quality as the one shown in~\eqref{eq:amy-toffoli}
starting from the mapping in~\eqref{eq:barenco} instead of using expensive
search techniques.
We will demonstrate this by deriving a realization for a more general gate
\begin{equation}
  \label{eq:single-target-gate}
  \qcircx=.75cm
  \def\qcircscale{.7}
  \begin{qcirc}
    \emp \cr \emp \cr \targ\con0
    \draw[fill=white,rounded corners=2pt] ($(qc00)-(4.5pt,-3.5pt)$) coordinate (a) rectangle ($(qc10)+(4.5pt,-3.5pt)$) coordinate (b);
    \node[font=\footnotesize] at ($(a)!.5!(b)$) {$f$};
  \begin{scope}[every node/.style={font=\footnotesize,inner sep=1.5pt}]
    \node[left,xshift=-.3cm] at (qc00) {$x_1$};
    \node[left,xshift=-.3cm] at (qc10) {$x_2$};
    \node[left,xshift=-.3cm] at (qc20) {$x_3$};
    \node[right,xshift=.3cm] at (qc00) {$x_1$};
    \node[right,xshift=.3cm] at (qc10) {$x_2$};
    \node[right,xshift=.3cm] at (qc20) {$x_3\oplus f(x_1,x_2)$};
  \end{scope}
  \end{qcirc}
\end{equation}
in which the target line~$x_3$ is inverted with respect to a control
function~$f$ of a particular type.
The Toffoli gate in~\eqref{eq:amy-toffoli} is the special case where~$f=x_1x_2$.
Given a unitary matrix~$U$, in this construction, we will use the notation~$U^p$
for~$p\in\{0,1\}$ where~$U^0=U$ and~$U^1=U^\dagger$.

\begin{figure*}[t]
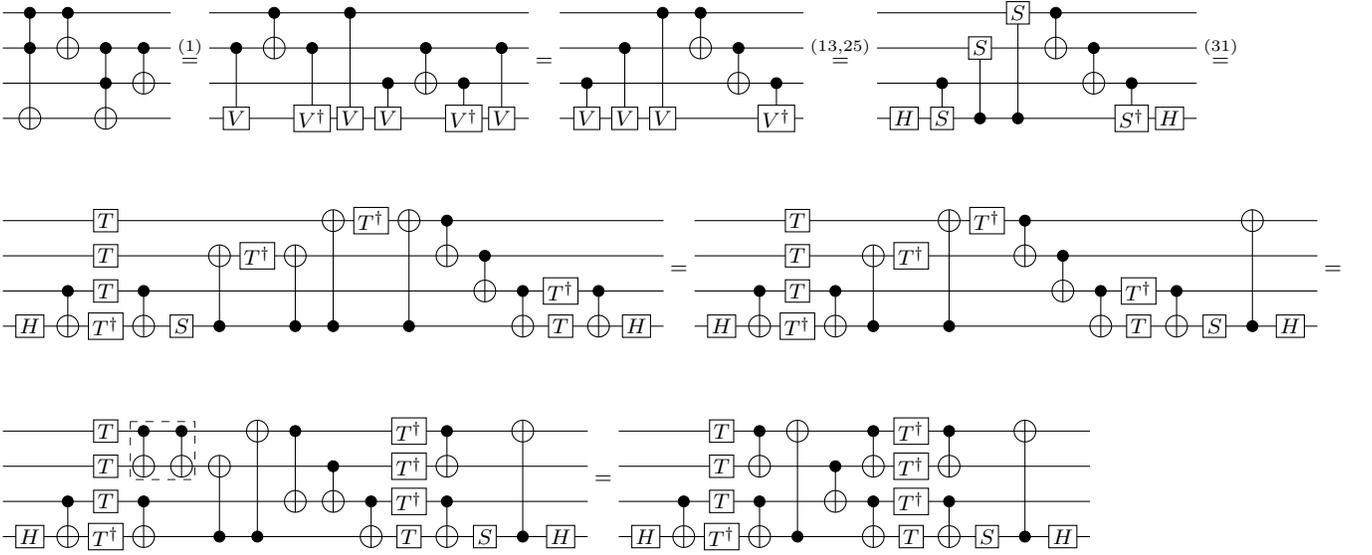

  \footnotesize
  \qcircx=.7cm
  \def\qcircscale{.72}
$
  \begin{qcirc}
    \pc \> \pc \> \emp \> \emp \cr
    \pc \> \targ\con0 \> \pc \> \pc \cr
    \emp \> \emp \> \pc \> \targ\con1 \cr
    \targ\con0 \> \emp \> \targ\con1 \> \emp
  \end{qcirc}
  \stackrel{(\ref{eq:barenco})}{=}
  \begin{qcirc}
    \emp \> \pc \> \emp \> \pc \> \emp \> \emp \> \emp \> \emp \cr
    \pc \> \targ\con0 \> \pc \> \emp \> \emp \> \pc \> \emp \> \pc \cr
    \emp \> \emp \> \emp \> \emp \> \pc \> \targ\con1 \> \pc \> \emp \cr
    \g{V}\con1 \> \emp \> \g{V^\dagger}\con1 \> \g{V}\con0 \> \g{V}\con2 \> \emp \> \g{V^\dagger}\con2 \> \g{V}\con1
  \end{qcirc}
  =
  \begin{qcirc}
    \emp \> \emp \> \pc \> \pc \> \emp \> \emp \cr
    \emp \> \pc \> \emp \> \targ\con0 \> \pc \> \emp \cr
    \pc \> \emp \> \emp \> \emp \> \targ\con1 \> \pc \cr
    \g{V}\con2 \> \g{V}\con1 \> \g{V}\con0 \> \emp \> \emp \> \g{V^\dagger}\con2
  \end{qcirc}
  \stackrel{(\ref{eq:pauli-root-to-pauli-root},\ref{eq:switch-z-roots})}{=}
  \begin{qcirc}
    \emp \> \emp \> \emp \> \g{S} \> \pc \> \emp \> \emp \> \emp \cr
    \emp \> \emp \> \g{S} \> \emp \> \targ\con0 \> \pc \> \emp \> \emp \cr
    \emp \> \pc \> \emp \> \emp \> \emp \> \targ\con1 \> \pc \> \emp \cr
    \g{H} \> \g{S}\con2 \> \pc\con1 \> \pc\con0 \> \emp \> \emp \> \g{S^\dagger}\con2 \> \g{H}
  \end{qcirc}
  \stackrel{(\ref{eq:theo-remove-one-line})}{=}
$\hfill\phantom{.}
\bigskip
\bigskip

$
  \begin{qcirc}
    \emp \> \emp \> \g{T} \> \emp \> \emp \> \emp \> \emp \> \emp \> \targ \> \g{T^\dagger} \> \targ \> \pc \> \emp \> \emp \> \emp \> \emp \> \emp \cr
    \emp \> \emp \> \g{T} \> \emp \> \emp \> \targ \> \g{T^\dagger} \> \targ \> \emp \> \emp \> \emp \> \targ\con0 \> \pc \> \emp \> \emp \> \emp \> \emp \cr
    \emp \> \pc \> \g{T} \> \pc \> \emp \> \emp \> \emp \> \emp \> \emp \> \emp \> \emp \> \emp \> \targ\con1 \> \pc \> \g{T^\dagger} \> \pc \> \emp \cr
    \g{H} \> \targ\con2 \> \g{T^\dagger} \> \targ\con2 \> \g{S} \> \pc\con1 \> \emp \> \pc\con1 \> \pc\con0 \> \emp \> \pc\con0 \> \emp \> \emp \> \targ\con2 \> \g{T} \> \targ\con2 \> \g{H}
  \end{qcirc}
  =
  \begin{qcirc}
  \emp \> \emp \> \g{T} \> \emp \> \emp \> \emp \> \targ \> \g{T^\dagger} \> \pc \> \emp \> \emp \> \emp \> \emp \> \emp \> \targ \> \emp \cr
    \emp \> \emp \> \g{T} \> \emp \> \targ \> \g{T^\dagger} \> \emp \> \emp \> \targ\con0 \> \pc \> \emp \> \emp \> \emp \> \emp \> \emp \> \emp \cr
    \emp \> \pc \> \g{T} \> \pc \> \emp \> \emp \> \emp \> \emp \> \emp \> \targ\con1 \> \pc \> \g{T^\dagger} \> \pc \> \emp \> \emp \> \emp \cr
    \g{H} \> \targ\con2 \> \g{T^\dagger} \> \targ\con2 \> \pc\con1 \> \emp \> \pc\con0 \> \emp \> \emp \> \emp \> \targ\con2 \> \g{T} \> \targ\con2 \> \g{S} \> \pc\con0 \> \g{H}
  \end{qcirc}
  =
$\hfill\phantom{.}
\bigskip
\bigskip

$
  \begin{qcirc}
  \emp \> \emp \> \g{T} \> \pc \> \pc \> \emp \> \targ \> \pc \> \emp \> \emp \> \g{T^\dagger} \> \pc \> \emp \> \targ \> \emp \cr
  \emp \> \emp \> \g{T} \> \targ\con0 \> \targ\con0 \> \targ \> \emp \> \emp \> \pc \> \emp \> \g{T^\dagger} \> \targ\con0 \> \emp \> \emp \> \emp \cr
  \emp \> \pc \> \g{T} \> \pc \> \emp \> \emp \> \emp \> \targ\con0 \> \targ\con1 \> \pc \> \g{T^\dagger} \> \pc \> \emp \> \emp \> \emp \cr
  \g{H} \> \targ\con2 \> \g{T^\dagger} \> \targ\con2 \> \emp \> \pc\con1 \> \pc\con0 \> \emp \> \emp \> \targ\con2 \> \g{T} \> \targ\con2 \> \g{S} \> \pc\con0 \> \g{H}
  \draw[dashed,gray!30!black] ($(qc03)-(7pt,-5pt)$) rectangle ($(qc14)+(7pt,-7pt)$);
  \end{qcirc}
  =
  \begin{qcirc}
  \emp \> \emp \> \g{T} \> \pc \> \targ \> \emp \> \pc \> \g{T^\dagger} \> \pc \> \emp \> \targ \> \emp \cr
  \emp \> \emp \> \g{T} \> \targ\con0 \> \emp \> \pc \> \targ\con0 \> \g{T^\dagger} \> \targ\con0 \> \emp \> \emp \> \emp \cr
  \emp \> \pc \> \g{T} \> \pc \> \emp \> \targ\con1 \> \pc \> \g{T^\dagger} \> \pc \> \emp \> \emp \> \emp \cr
  \g{H} \> \targ\con2 \> \g{T^\dagger} \> \targ\con2 \> \pc\con0 \> \emp \> \targ\con2 \> \g{T} \> \targ\con2 \> \g{S} \> \pc\con0 \> \g{H}
  \end{qcirc}
$\hfill\phantom{.}
  \caption{Derivation of the full adder}
  \label{fig:full-adder}
\end{figure*}

To begin, applying a modified version of the mapping in~\eqref{eq:barenco}
yields
\[
  \qcircx=.75cm
  \def\qcircscale{.7}
  \begin{qcirc}
    \emp  \> \pc    \> \emp  \> \pc    \> \pc \cr
    \pc \> \targ\con0 \> \pc \> \targ\con0 \> \emp  \cr
    \g{V^a}\con1 \> \emp \> \g{V^b}\con1 \> \emp \> \g{V^c}\con0
  \end{qcirc}
  \stackrel{(\ref{eq:theo-barenco-extended},\ref{eq:switch-z-roots})}{=}
  \begin{qcirc}
    \emp \> \emp \> \pc    \> \emp \> \pc    \> \g{S^c} \> \emp \cr
    \emp \> \pc  \> \targ\con0 \> \pc  \> \targ\con0 \> \emp  \> \emp \cr
    \g{H} \> \g{S^a}\con1 \> \emp \> \g{S^b}\con1 \> \emp \> \pc\con0 \> \g{H}
  \end{qcirc}
\]
Next, the controlled~$S$ gates are replaced according to~\eqref{eq:theo-remove-one-line}
into circuits with uncontrolled~$T$ gates.
We implicitly make use of the commutation property (first item in the list of
degrees of freedom) to enable the combination of gates.
\[
  \qcircx=.8cm
  \def\qcircscale{.7}
  \begin{qcirc}
  \emp \> \emp \> \emp \> \emp \> \pc \> \emp \> \emp \> \emp \> \emp \> \pc \> \g{T^c} \> \targ \> \g{T^{\bar c}} \> \targ \> \emp \cr
  \emp \> \pc \> \g{T^a} \> \pc \> \targ\con0 \> \emp \> \pc \> \g{T^b} \> \pc \> \targ\con0 \> \emp \> \emp \> \emp \> \emp \> \emp \cr
  \g{H} \> \targ\con1 \> \g{T^{\bar a}} \> \targ\con1 \> \g{T^a} \> \g{T^b} \> \targ\con1 \> \g{T^{\bar b}} \> \targ\con1 \> \emp \> \emp \> \pc\con0 \> \g{T^c} \> \pc\con0 \> \g{H}
  \begin{scope}[<->,thick,gray!30!black,every node/.style={midway,below,inner sep=1pt,font=\footnotesize}]
    \draw[->] ([yshift=7pt] qc010) to[bend right=10] ([yshift=7pt] qc02);
    \draw ([yshift=-9pt] qc24) to[bend right=10] node {combine} ([yshift=-9pt] qc25);
    \draw (qc04) to[out=180,in=90] (qc13);
    \draw[->] ($(qc09)!.5!(qc19)$) to[bend right=15] (qc014);
  \end{scope}
  \end{qcirc}
\]
A~$T$ gate can move over a control line without changing the function, but not
over a target.
In order to move CNOT gates over other controlled gates we are making use
of~\eqref{eq:cnot-rule}.
The combined~$T^aT^b$ gate commutes with the block of three gates to the right to it.
\[
  \qcircx=.95cm
  \def\qcircscale{.7}
  \begin{qcirc}
  \emp \> \emp \> \g{T^c} \> \pc \> \pc \> \emp \> \emp \> \emp \> \targ \> \g{T^{\bar c}} \> \targ \> \pc \cr
  \emp \> \pc \> \g{T^a} \> \emp \> \targ\con0 \> \g{T^b} \> \pc \> \emp \> \emp \> \emp \> \emp \> \targ\con0 \cr
  \g{H} \> \targ\con1 \> \g{T^{\bar a}} \> \targ\con0 \> \emp \> \g{T^{\bar b}} \> \targ\con1 \> \g{T^aT^b} \> \pc\con0 \> \g{T^c} \> \pc\con0 \> \g{H}
  \begin{scope}[<->,thick,gray!30!black,every node/.style={midway,below,inner sep=1pt,font=\footnotesize}]
    \draw[->] ([yshift=7pt] qc15) to[bend left=10] ([yshift=7pt] qc19);
  \end{scope}
  \end{qcirc}
\]
This last step reduces the total depth by one.
\[
  \qcircx=.95cm
  \def\qcircscale{.7}
  \begin{qcirc}
  \emp \> \emp \> \g{T^c} \> \pc \> \pc \> \emp \> \emp \> \targ \> \g{T^{\bar c}} \> \targ \> \pc \cr
  \emp \> \pc \> \g{T^a} \> \emp \> \targ\con0 \> \pc \> \emp \> \emp \> \g{T^b} \> \emp \> \targ\con0 \cr
  \g{H} \> \targ\con1 \> \g{T^{\bar a}} \> \targ\con0 \> \g{T^{\bar b}} \> \targ\con1 \> \g{T^aT^b} \> \pc\con0 \> \g{T^c} \> \pc\con0 \> \g{H}
  \end{qcirc}
\]

By setting~$a=c=0$ and~$b=1$ one obtains a circuit that realizes the Toffoli
gate and has the same characteristics as the circuit
in \eqref{eq:amy-toffoli}, \emph{i.e.}~a $T$-depth of~$3$ with a total depth
of~$10$. The more complicated derivation of precisely the same circuit as
in~\eqref{eq:amy-toffoli} is given in the appendix.

Depending on how~$a$, $b$, and~$c$ are assigned values of 0 and 1, we can realize
four different circuits as illustrated in the following table where the first row yields a circuit for the Toffoli gate.  Note that each row represents two possible assignments, \emph{e.g.} setting~$a=c=1$ and~$b=0$ also realizes a Toffoli gate.  The two possibilities correspond to performing the required rotations in opposite direction around the Bloch sphere.
\begin{center}
\small
\begin{tabularx}{.8\linewidth}{Xl}\hline
  Assignment & Control function \\\hline
  $c=a$ and $b\neq a$ & $f(x_1,x_2)=x_1x_2$ \\
  $b=a$ and $c\neq a$ & $f(x_1,x_2)=x_1\bar x_2$ \\
  $b=c$ and $a\neq c$ & $f(x_1,x_2)=\bar x_1x_2$ \\
  $a=b=c$ & $f(x_1,x_2)=x_1\lor x_2$\\\hline
\end{tabularx}
\end{center}

The last entry in the table above can be used to realize $x_3\oplus \bar x_1\bar x_2$ by adding a NOT gate to the right of the right-hand $H$.

Another example is given in Fig.~\ref{fig:full-adder} in which the
full adder circuit in~\cite[Fig.~11]{AMM+:2013} is derived starting from two
Peres gates, each shown as a Toffoli gate followed by a CNOT.
After the NCV expansion has been applied two controlled $V$ gates cancel and we can
reorder the gates before mapping them to controlled $S$ gates.
After the controlled $S$ gates have been mapped according to~\eqref{eq:theo-remove-one-line},
the next steps aim to align the~$T$ gates
into as few stages as possible by moving the blocking CNOT gates
using~\eqref{eq:cnot-rule}.
In that way a $T$-depth of~$2$ is readily achieved.  However, in order to get the
same total depth as~\cite[Fig.~11]{AMM+:2013}, one needs to perform the not so obvious
step of adding two CNOT gates into the circuit (as depicted in the dashed
rectangle) followed by more CNOT moves and reductions.

It is very interesting that by applying \eqref{eq:pauli-root-to-pauli-root}
and~\eqref{eq:turn-pauli-with-hadamard} followed by the canceling of neighboring
Hadamard gates, one can easily transform the resulting $T$ gate circuit into
a~$W$ gate circuit
\[
  \qcircx=.95cm
  \def\qcircscale{.7}
  \begin{qcirc}
  \g{H} \> \emp \> \g{W} \> \targ \> \pc \> \emp \> \targ \> \g{W^\dagger} \> \targ \> \emp \> \pc \> \g{H} \cr
  \g{H} \> \emp \> \g{W} \> \pc\con0 \> \emp \> \targ \> \pc\con0 \> \g{W^\dagger} \> \pc\con0 \> \emp \> \emp \> \g{H} \cr
  \g{H} \> \targ \> \g{W} \> \targ \> \emp \> \pc\con1 \> \targ \> \g{W^\dagger} \> \targ \> \emp \> \emp \> \g{H} \cr
  \emp \> \pc\con2 \> \g{W^\dagger} \> \pc\con2 \> \targ\con0 \> \emp \> \pc\con2 \> \g{W} \> \pc\con2 \> \g{V} \> \targ\con0 \> \emp
  \end{qcirc}
\]
where~$W=\sqrt[4]{X}$.
This translation is applicable in general and not specific to this example.
Moreover, other Pauli roots can be used together with their respective
translation matrices.
As a result, the size of the circuit does not depend on the underlying Pauli
root quantum gate library.
However, some optimization techniques may only be possible for certain Pauli
matrices.

\section{Clifford groups}
The Clifford group is usually defined over~$C_1(X)$, $S$, and~$H$ in the
literature.
In this section, we investigate how the Pauli roots relate to Clifford
groups and can be utilized in order to express Clifford groups in a general
manner.
We first derive some useful equalities involving Pauli matrices and their roots.

The Pauli matrices form an orthogonal basis for the real Hilbert space and
therefore one Pauli matrix cannot be expressed as a linear combination of the
others.
However, the Pauli matrices obey the commutation relation $[\sigma_a,\sigma_b]
=2i\sum_{c=1}^3\varepsilon_{abc}\sigma_c$ where~$\varepsilon_{abc}=\frac{(a-b)(b-c)(c-a)}{2}$
is the Levi-Civita symbol and the anticommutation relation~$\{\sigma_a,\sigma_b\}
=2\delta_{ab}I$ where~$\delta_{ab}=\langle a|I_3|b\rangle$ is the Kronecker delta.
Exploiting these relations, the following equation is obtained
\begin{align}
  \label{eq:communtation-step}
  {}[\sigma_a,\sigma_b]+\{\sigma_a,\sigma_b\}&=(\sigma_a\sigma_b-\sigma_b\sigma_a)
  +(\sigma_a\sigma_b+\sigma_b\sigma_a)\nonumber \\
  &=2i\sum_{c=1}^3\varepsilon_{abc}\sigma_c+2\delta_{ab}I
\end{align}
leading to
\begin{equation}
  \label{eq:pauli-as-others}
  \sigma_a\sigma_b=i\sum_{c=1}^3\varepsilon_{abc}\sigma_c+\delta_{ab}I.
\end{equation}
Hence, an equation can be derived that expresses
a Pauli root~$\sigma_b$ in terms of a Pauli root~$\sigma_a$ and the square root
of~$\sigma_c$ for~$|\{a,b,c\}|=3$, \emph{i.e.}
\begin{equation}
  \label{eq:pauli-root-as-others}
  \sqrt[k]{\sigma_b}
  =\varepsilon_{abc}\left(\sqrt{\sigma_c}\sqrt[k]{\sigma_a}\sqrt{\sigma_c}^\dagger\right).
\end{equation}
Let
\begin{equation}
  \label{eq:pauli-group}
  \mathcal{P}_n=\{\sigma_{j_1}\otimes\dots\otimes\sigma_{j_n}\mid
  \{j_1,\dots,j_n\}\subseteq\{0,1,2,3\}\}
\end{equation}
be the \emph{Pauli group} where~$\sigma_0=I$, \emph{i.e.}~the set of all~$n$-fold tensor
products of the identity and the Pauli matrices~\cite{DG:1998}.
Following the definition in~\cite{AMM+:2013}, the \emph{Clifford
group}~$\mathcal{C}_n$ is the normalizer~\cite{DG:1998} of~$\mathcal{P}_n$ in
the group of all unitary~$2^n\times2^n$ matrices~$\mathrm{U}(2^n)$, \emph{i.e.}
\begin{equation}
  \label{eq:clifford-group}
  \mathcal{C}_n=\{U\in\mathrm{U}(2^n)\mid
  \forall P\in\mathcal{P}_n:UPU^\dagger\in\mathcal{P}_n\}.
\end{equation}

The Clifford group can be generated by~$C_1(X)$,
$S=\sqrt{Z}$, and~$H$~\cite{DG:1998}.
Other matrices can then be derived \emph{e.g.}
\begin{equation}
  \label{eq:clifford-examples}
  Z\!=\!SS, \, X\!=\!HSSH, \,
  Y\!\stackrel{\text{(\ref{eq:pauli-root-as-others})}}{=}\!SXS^\dagger\!=\!SHSSHSSS.
\end{equation}
Other choices are possible.  For example, the Clifford group can be generated by
\emph{e.g.}~$C_1(X)$, $V$, and~$H$, since $S=HVH$.
\begin{theorem}
\label{theo:clifford}
The gates~$C_1(\sigma_a)$, $\sqrt{\sigma_a}$, and~$\rho_{ab}$ generate the
Clifford group where~$a\neq b$.
\end{theorem}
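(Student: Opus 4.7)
The plan is to reduce to the known fact (see \cite{DG:1998}) that the three gates $\{C_1(X), S, H\}$, which in the notation of this paper are $\{C_1(\sigma_1), \sqrt{\sigma_3}, \rho_{13}\}$, generate the Clifford group $\mathcal{C}_n$. It therefore suffices to show that each of these three standard generators lies in the subgroup generated by $C_1(\sigma_a)$, $\sqrt{\sigma_a}$, and $\rho_{ab}$ for the given fixed indices $a \neq b$.

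The first task is to generate all three Pauli square roots. Using~\eqref{eq:pauli-root-to-pauli-root} with $k=2$, together with the fact that $\rho_{ab}$ is involutory, gives $\sqrt{\sigma_b} = \rho_{ab} \sqrt{\sigma_a} \rho_{ab}$ immediately. For the third index $c$, with $\{a,b,c\}=\{1,2,3\}$, I would apply~\eqref{eq:pauli-root-as-others} to the cyclic relabeling $(a,b,c)\mapsto(b,c,a)$, which isolates $\sqrt{\sigma_c} = \varepsilon_{bca}\,\sqrt{\sigma_a}\sqrt{\sigma_b}\sqrt{\sigma_a}^\dagger$. Here $\sqrt{\sigma_a}^\dagger = (\sqrt{\sigma_a})^3$ since $\sqrt{\sigma_a}$ has order $4$, and the prefactor $\varepsilon_{bca}\in\{\pm 1\}$ amounts to a global sign (equivalently a Pauli phase) that may be absorbed. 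Squaring then also produces the three Pauli matrices $\sigma_1, \sigma_2, \sigma_3$.

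The second task is to generate the remaining translation matrices. Combining~\eqref{eq:translation-matrix} with~\eqref{eq:rotation-as-roots} shows that $\rho_{ij} = \sqrt{\sigma_i}\sqrt{\sigma_j}\sqrt{\sigma_i}$ up to an overall phase factor of $e^{-i\pi/4}$, so with all three square roots in hand each of $\rho_{12}, \rho_{13}, \rho_{23}$ belongs to the generated subgroup. In particular, $H=\rho_{13}$ and $S=\sqrt{\sigma_3}$ are now available. Finally, $C_1(X)=C_1(\sigma_1)$ is obtained from $C_1(\sigma_a)$ by conjugating the target line with a translation matrix: by~\eqref{eq:pauli-to-pauli}, $C_1(\sigma_1) = (I\otimes \rho_{1a})\,C_1(\sigma_a)\,(I\otimes \rho_{1a})$, and $\rho_{1a}$ is available from the previous step. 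Invoking \cite{DG:1998} concludes the proof.

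The only subtle step is the derivation of $\sqrt{\sigma_c}$: equation~\eqref{eq:pauli-root-as-others} as stated places $\sqrt{\sigma_c}$ inside a conjugation acting on $\sqrt{\sigma_a}$, rather than isolating it, so the circularity must be broken by a careful relabeling of indices. Once that observation is made, the remaining steps are routine manipulations using identities already established in the preliminaries, and the main theorem follows.
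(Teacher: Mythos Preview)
Your proof is correct and follows essentially the same route as the paper's: obtain $\sqrt{\sigma_b}$ from $\sqrt{\sigma_a}$ via~\eqref{eq:pauli-root-to-pauli-root}, then obtain the remaining square root $\sqrt{\sigma_c}$ via~\eqref{eq:pauli-root-as-others}, and reduce to the standard generating set $\{C_1(X),S,H\}$. The paper's proof is extremely terse---it records only the two square-root steps and leaves implicit the recovery of the other translation matrices and of $C_1(X)$---so your explicit derivations of $\rho_{ij}$ from~\eqref{eq:translation-matrix}--\eqref{eq:rotation-as-roots} and of $C_1(\sigma_1)=(I\otimes\rho_{1a})C_1(\sigma_a)(I\otimes\rho_{1a})$, together with your observation about the index relabeling needed to isolate $\sqrt{\sigma_c}$, fill in details the paper omits rather than constituting a different argument.
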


\begin{proof}
According to \eqref{eq:pauli-root-to-pauli-root} one can obtain a second
Pauli square root~$\sqrt{\sigma_b}$ from~$\sqrt{\sigma_a}$ by multiplying it
with~$\rho_{ab}$ on both sides.
The third square root can then be obtained from~\eqref{eq:pauli-root-as-others}.
\end{proof}
\begin{corollary}
The gates~$C_1(\sigma_a)$, $\sqrt{\sigma_b}$, and~$\rho_{ab}$ generate the
Clifford group.
\end{corollary}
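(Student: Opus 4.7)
The plan is to reduce the corollary directly to Theorem~\ref{theo:clifford}. The only difference between the two generating sets is that the corollary uses $\sqrt{\sigma_b}$ in place of $\sqrt{\sigma_a}$, so it suffices to show that $\sqrt{\sigma_a}$ can itself be produced from $\{C_1(\sigma_a), \sqrt{\sigma_b}, \rho_{ab}\}$. Once this single missing generator is obtained, the Clifford group is generated by appeal to Theorem~\ref{theo:clifford}.

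First I would invoke the translation identity for Pauli roots, equation~\eqref{eq:pauli-root-to-pauli-root}, which gives
\begin{equation*}
  \sqrt{\sigma_a} = \rho_{ab}\sqrt{\sigma_b}\rho_{ab}.
\end{equation*}
Since $\rho_{ab}$ and $\sqrt{\sigma_b}$ are among the generators listed in the corollary, this expresses $\sqrt{\sigma_a}$ as a finite product of those generators. Hence the set $\{C_1(\sigma_a), \sqrt{\sigma_a}, \rho_{ab}\}$ lies in the group generated by $\{C_1(\sigma_a), \sqrt{\sigma_b}, \rho_{ab}\}$.

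Then I would simply apply Theorem~\ref{theo:clifford}, which asserts that $\{C_1(\sigma_a), \sqrt{\sigma_a}, \rho_{ab}\}$ already generates the Clifford group. Conversely, every generator in the corollary's set lies in $\mathcal{C}_n$ (the controlled Pauli is Clifford, $\sqrt{\sigma_b}$ is a Pauli square root which is Clifford, and $\rho_{ab}$ is a product of Pauli roots and phases by~\eqref{eq:translation-matrix}), so the group they generate is contained in $\mathcal{C}_n$. Combining both inclusions yields equality.

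There is essentially no obstacle here: the entire content of the corollary is the single substitution enabled by~\eqref{eq:pauli-root-to-pauli-root}, and the only thing to be careful about is making the role of the indices $a$ and $b$ consistent throughout (the same pair must serve both for generating $\sqrt{\sigma_a}$ via $\rho_{ab}$ and for invoking Theorem~\ref{theo:clifford}).
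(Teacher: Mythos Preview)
Your proposal is correct and follows precisely the approach the paper intends: the corollary is stated immediately after Theorem~\ref{theo:clifford} with no separate proof, and the only content is the single application of~\eqref{eq:pauli-root-to-pauli-root} to recover $\sqrt{\sigma_a}$ from $\sqrt{\sigma_b}$ and $\rho_{ab}$, exactly as you do. Your added remark about the reverse inclusion (that each generator already lies in $\mathcal{C}_n$) is a welcome bit of care that the paper leaves implicit even in the proof of the theorem itself.
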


Based on the results of this section, circuits built using gates from the
Clifford+$T$ library can also be expressed using a more general Clifford group
together with a fourth root of a corresponding Pauli matrix.

\section{Negator operations}
In~\cite{VB:2013} the authors introduce a related gate called the
\texttt{NEGATOR}~$N(\theta)$ which they obtain by replacing~$\frac{\pi}{2k}$ by
an angle~$\theta$ in the matrix representation for the root of Pauli~$X$, \emph{i.e.}
\begin{equation}
  \label{eq:negator}
  N(\theta)=I+i\sin\theta+e^{i\theta}\left(I-X\right).
\end{equation}
This gate can be generalized for all Pauli matrices as
\begin{equation}
  \label{eq:negator-generlized}
  N_a(\theta)=I+i\sin\theta+e^{i\theta}\left(I-\sigma_a\right)
\end{equation}
and it can be proven that the matrix is obtained from the corresponding Pauli
root by replacing~$\frac{\pi}{2k}$ with~$\theta$.

\begin{theorem}
  \label{theo:negator}
  It holds, that
  \begin{equation}
    \label{eq:theo-negator}
    N_a\left(\tfrac{\theta}{2}\right)=e^{\frac{i\theta}{2}}R_a(\theta).
  \end{equation}
Note that~$\root k\of{\sigma_a}=
e^{\frac{i\pi}{2k}}R_a\left(\tfrac{\pi}{k}\right)$.
\end{theorem}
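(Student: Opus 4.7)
The plan is to verify \eqref{eq:theo-negator} by direct computation, reducing both sides to the same linear combination of $I$ and $\sigma_a$. Since $\{I,\sigma_a\}$ is linearly independent, matching the two scalar coefficients is enough.

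First, I would substitute the definition $R_a(\theta)=\cos(\theta/2)I-i\sin(\theta/2)\sigma_a$ into the right-hand side and multiply through by $e^{i\theta/2}=\cos(\theta/2)+i\sin(\theta/2)$. After distributing, the coefficient of $I$ becomes $\cos^2(\theta/2)+i\sin(\theta/2)\cos(\theta/2)$ and the coefficient of $\sigma_a$ becomes $\sin^2(\theta/2)-i\sin(\theta/2)\cos(\theta/2)$. Applying the standard double-angle identities $2\cos^2(\theta/2)=1+\cos\theta$, $2\sin^2(\theta/2)=1-\cos\theta$, and $2\sin(\theta/2)\cos(\theta/2)=\sin\theta$, together with Euler's formula $\cos\theta+i\sin\theta=e^{i\theta}$, this collapses to
$$e^{i\theta/2}R_a(\theta)=\tfrac{1}{2}\bigl[(1+e^{i\theta})I+(1-e^{i\theta})\sigma_a\bigr].$$

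Second, I would expand $N_a(\theta/2)$ using \eqref{eq:negator-generlized} and rewrite it as a linear combination of $I$ and $\sigma_a$, using $\sigma_a^2=I$ as needed to eliminate any cross terms generated by $(I-\sigma_a)$, and then verify that it agrees term-by-term with the expression obtained for $e^{i\theta/2}R_a(\theta)$ above. The identity then follows.

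Finally, the parenthetical remark that $\sqrt[k]{\sigma_a}=e^{i\pi/(2k)}R_a(\pi/k)$ is simply \eqref{eq:roots-as-rotation} restated, and specializing \eqref{eq:theo-negator} at $\theta=\pi/k$ gives $N_a(\pi/(2k))=\sqrt[k]{\sigma_a}$, which is the precise sense in which $N_a$ is obtained from the $k$th Pauli root by replacing $\pi/(2k)$ with a free parameter. The main obstacle is purely bookkeeping: keeping track of the half-angle substitutions and verifying that the two complex scalar coefficients match; no deeper structural argument is required, since both sides are $2\times 2$ matrix-valued analytic functions of $\theta$ lying in the two-dimensional real-linear span of $I$ and $\sigma_a$ (with complex coefficients).
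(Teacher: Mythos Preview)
Your proposal is correct and follows essentially the same route as the paper: a direct algebraic verification that both sides are the same linear combination of $I$ and $\sigma_a$. The paper organizes the computation a bit more compactly by multiplying $N_a(\theta/2)$ through by $e^{-i\theta/2}$ and using $e^{-i\theta/2}+i\sin(\theta/2)=\cos(\theta/2)$ to land directly on $R_a(\theta)$, rather than expanding each side separately via double-angle identities, but the substance is identical.
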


\begin{proof}
We rewrite \eqref{eq:theo-negator} to~$R_a(\theta)=e^{-\frac{i\theta}{2}}
N_a\left(\tfrac{\theta}{2}\right)$ and after expanding~$N_a(\tfrac{\theta}{2})$ we have
\begin{align*}
  &e^{-\frac{i\theta}{2}}\left(I+i\sin{\tfrac{\theta}{2}}
  e^{\frac{i\theta}{2}}\left(I-\sigma_a\right)\right)\\
  =\;&e^{-\frac{i\theta}{2}}I+i\sin{\tfrac{\theta}{2}}I
  -i\sin{\tfrac{\theta}{2}}\sigma_a
  \!=\!\cos{\tfrac{\theta}{2}}I
  \!-\!i\sin{\tfrac{\theta}{2}}\sigma_a
  \!=\!R_a(\theta)
\end{align*}
which concludes the proof.
\end{proof}

\begin{corollary}
  From Theorem~\ref{theo:clifford} it can be seen that~$C_1(\sigma_a)$, $N_a(\theta)$,
and~$\rho_{ab}$ generate the Clifford group.
\end{corollary}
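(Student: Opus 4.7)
The plan is to reduce the statement to Theorem~\ref{theo:clifford} by showing that a suitably chosen negator gate coincides with the Pauli square root appearing there. Concretely, I aim to exhibit a value of $\theta$ for which $N_a(\theta)=\sqrt{\sigma_a}$, so that the proposed generating set contains (and in fact equals, modulo that choice of $\theta$) the generating set of Theorem~\ref{theo:clifford}.

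First I would invoke Theorem~\ref{theo:negator} with the particular angle $\theta=\pi/2$, which yields
\[
  N_a\!\left(\tfrac{\pi}{4}\right)=e^{\frac{i\pi}{4}}R_a\!\left(\tfrac{\pi}{2}\right).
\]
Next I would compare this with the $k=2$ instance of~\eqref{eq:roots-as-rotation}, namely
\[
  \sqrt{\sigma_a}=e^{\frac{i\pi}{4}}R_a\!\left(\tfrac{\pi}{2}\right),
\]
from which $N_a(\pi/4)=\sqrt{\sigma_a}$ follows immediately. In particular, the Pauli square root is in the group generated by $\{C_1(\sigma_a),\,N_a(\theta),\,\rho_{ab}\}$ as soon as $\theta$ is allowed to take the value $\pi/4$.

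With this identification in hand, the conclusion is an application of Theorem~\ref{theo:clifford}: since $\sqrt{\sigma_a}$, $C_1(\sigma_a)$ and $\rho_{ab}$ generate $\mathcal{C}_n$, and since we have now realized $\sqrt{\sigma_a}$ as $N_a(\pi/4)$, the set $\{C_1(\sigma_a),N_a(\theta),\rho_{ab}\}$ generates $\mathcal{C}_n$ as well. There is really no obstacle here; the corollary is essentially a dictionary translation between the Pauli-root parameterization and the negator parameterization, and the only thing to verify is the global-phase bookkeeping already carried out in Theorem~\ref{theo:negator}.
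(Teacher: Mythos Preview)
Your proposal is correct and matches the paper's intended reasoning. The paper offers no explicit proof for this corollary, treating it as immediate from Theorem~\ref{theo:clifford}; your argument supplies exactly the missing step, namely that Theorem~\ref{theo:negator} together with~\eqref{eq:roots-as-rotation} at $k=2$ gives $N_a(\pi/4)=\sqrt{\sigma_a}$, after which Theorem~\ref{theo:clifford} applies directly.
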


\section{Concluding Remarks}
This paper has examined relationships between the Pauli matrices and their roots with an emphasis on
circuit synthesis and optimization.  We have presented a number of useful identities and techniques.
We have also shown that by applying those methods it is possible to systematically derive circuits employing Clifford+$T$ gates from NCV circuits
without using expensive exhaustive search techniques.
Our ongoing work is to incorporate the techniques discussed in this paper into a synthesis algorithm opening the
way to the synthesis and optimization of larger Clifford+$T$ gate circuits.

\bibliography{library}

\newpage
\appendix
\section{Deriving the circuit in~(\ref{eq:amy-toffoli})}
\begingroup
  \qcircx=.8cm
  \def\qcircscale{.7}
Before we derive the circuit, we show one further identity:
\[
  \qcircx=.73cm
  \begin{qcirc}
    \pc \> \emp \> \pc \cr
    \targ\con0 \> \g{T} \> \targ\con0
  \end{qcirc}
  \!=\!
  \begin{qcirc}
    \pc \> \g{T^\dagger} \> \pc \> \g{T} \> \emp \cr
    \targ\con0 \> \g{T} \> \targ\con0 \> \g{T^\dagger} \> \g{T}
  \end{qcirc}
  \!\stackrel{\eqref{eq:theo-remove-one-line}}{=}\!
  \begin{qcirc} \pc \> \g{T} \cr \g{S}\con0 \> \g{T} \end{qcirc}
  \!\stackrel{\eqref{eq:switch-z-roots}}{=}\!
  \begin{qcirc} \g{S} \> \g{T} \cr \pc\con0 \> \g{T} \end{qcirc}
\]
\begin{equation}
  \qcircx=.73cm
  \label{eq:swap-t}
  \stackrel{\eqref{eq:theo-remove-one-line}}{=}
  \begin{qcirc}
    \targ \> \g{T} \> \targ \> \g{T^\dagger} \> \g{T} \cr
    \pc\con0 \> \g{T^\dagger} \> \pc\con0 \> \g{T} \> \emp
  \end{qcirc}
  =
  \begin{qcirc}
    \targ \> \g{T} \> \targ \cr
    \pc\con0 \> \emp \> \pc\con0
  \end{qcirc}
\end{equation}

We start as before by mapping the controlled $V$ and $V^\dagger$ gates to $S$ and $S^\dagger$ gates as follows.
\[
\begin{qcirc} \pc \cr \pc \cr \targ\con0 \end{qcirc}
=
\begin{qcirc}
  \emp       \> \pc        \> \emp       \> \pc              \> \pc  \cr
  \pc        \> \targ\con0 \> \pc              \> \targ\con0 \> \emp \cr
  \g{V}\con1 \> \emp       \> \g{V^\dagger}\con1 \> \emp       \> \g{V}\con0
\end{qcirc}
=
\begin{qcirc}
  \emp  \> \emp       \> \pc        \> \emp       \> \pc              \> \pc        \> \emp  \cr
  \emp  \> \pc        \> \targ\con0 \> \g{S^\dagger} \> \targ\con0 \> \emp       \> \emp \cr
  \g{H} \> \g{S}\con1 \> \emp       \> \pc\con1 \> \emp       \> \g{S}\con0 \> \g{H}
\end{qcirc}
\]
Replacing the controlled~$S$ and $S^\dagger$ gates by  cascades consisting of uncontrolled
$T$ and $T^\dagger$ gates yields
\[
\begin{qcirc}
  \emp \> \emp \> \emp \> \emp \> \emp \> \pc \> \emp \> \emp \> \emp \> \emp \> \pc \> \pc \> \g{T} \> \pc \> \emp \cr
  \emp \> \emp \> \pc \> \g{T} \> \pc \> \targ\con0 \> \g{T^\dagger} \> \targ \> \g{T} \> \targ \>  \targ\con0 \> \emp \> \emp \> \emp \> \emp \cr
  \g{H} \> \g{T} \> \targ\con1 \> \g{T^\dagger} \> \targ\con1 \> \emp \> \emp \> \pc\con1 \> \g{T^\dagger} \> \pc\con1 \> \g{T} \> \targ\con0 \> \g{T^\dagger} \> \targ\con0 \> \g{H}
  \begin{scope}[<->,thick,gray!30!black,every node/.style={midway,below,inner sep=1pt,font=\footnotesize}]
    \draw[->] ([yshift=7pt] qc012) to[bend right=5] ([yshift=7pt] qc01);
    \draw[->] ([yshift=7pt] qc13) to[bend right=5] ([yshift=7pt] qc11);
    \draw ([yshift=-9pt] qc28) to[bend right=10] node {cancel} ([yshift=-9pt] qc210);
  \end{scope}
\end{qcirc}
\]
After moving~$T$ gates on the first and second line towards the left and
canceling the~$T$ with the~$T^\dagger$ gate on the third line we have
\[
\begin{qcirc}
  \emp \> \g{T} \> \emp \> \emp \> \emp \> \pc \> \emp \> \emp \> \emp \> \emp \> \pc \> \pc \> \emp \> \pc \> \emp \cr
  \emp \> \g{T} \> \pc \> \emp \> \pc \> \targ\con0 \> \g{T^\dagger} \> \targ \> \g{T} \> \targ \>  \targ\con0 \> \emp \> \emp \> \emp \> \emp \cr
  \g{H} \> \g{T} \> \targ\con1 \> \g{T^\dagger} \> \targ\con1 \> \emp \> \emp \> \pc\con1 \> \emp \> \pc\con1 \> \emp \> \targ\con0 \> \g{T^\dagger} \> \targ\con0 \> \g{H}
  \begin{scope}[<->,thick,gray!30!black,every node/.style={midway,below,inner sep=1pt,font=\footnotesize}]
    \draw[->] ([yshift=-7pt] qc211) to[bend left=10] ([yshift=-7pt] qc27);
  \end{scope}
\end{qcirc}
\]
In this circuit we move a CNOT to the left by applying~\eqref{eq:cnot-rule}
resulting in
\[
\begin{qcirc}
  \emp \> \g{T} \> \emp \> \emp \> \emp \> \pc \> \emp \> \pc \> \pc \> \emp \> \emp \> \emp \> \emp \> \pc \> \emp \cr
  \emp \> \g{T} \> \pc \> \emp \> \pc \> \targ\con0 \> \g{T^\dagger} \> \targ\con0 \> \emp \> \targ \> \g{T} \> \targ \> \emp \> \emp \> \emp \cr
  \g{H} \> \g{T} \> \targ\con1 \> \g{T^\dagger} \> \targ\con1 \> \emp \> \emp \> \emp \> \targ\con0 \> \pc\con1 \> \emp \> \pc\con1 \> \g{T^\dagger} \> \targ\con0 \> \g{H}
  \begin{scope}[<->,thick,gray!30!black,every node/.style={midway,below,inner sep=1pt,font=\footnotesize}]
    \draw[->] ([yshift=5pt] qc05) to[bend right=10] ([yshift=5pt] qc02);
    \draw[->] ([yshift=9pt] qc16) to[bend right=10] ([yshift=9pt] qc13);
    \draw[->] ([yshift=-9pt] qc212) to[bend left=10] ([yshift=-9pt] qc210);
  \end{scope}
\end{qcirc}
\]
After moving the selected gates to the left the circuit is as follows:
\[
\begin{qcirc}
  \emp \> \g{T} \> \pc \> \emp \> \emp \> \emp \> \pc \> \pc \> \emp \> \emp \> \emp \> \pc \> \emp \cr
  \emp \> \g{T} \> \targ\con0 \> \pc \> \g{T^\dagger} \> \pc \> \targ\con0 \> \emp \> \targ \> \g{T} \> \targ \> \emp \> \emp \cr
  \g{H} \> \g{T} \> \emp \> \targ\con1 \> \g{T^\dagger} \> \targ\con1 \> \emp \> \targ\con0 \> \pc\con1 \> \g{T^\dagger} \> \pc\con1 \> \targ\con0 \> \g{H}
  \draw[gray!30!black,thick,shorten <=-7pt,shorten >=-7pt] ([yshift=-11pt] qc22) -- node[below,inner sep=1.5pt,font=\footnotesize] {swap} ([yshift=-11pt] qc26);
\end{qcirc}
\]
In this step we make use of the identity in~\eqref{eq:swap-t} which changes the
position and orientation of the selected gates:
\[
\begin{qcirc}
  \emp \> \g{T} \> \targ \> \emp \> \g{T^\dagger} \> \emp \> \targ \> \pc \> \emp \> \emp \> \emp \> \pc \> \emp \cr
  \emp \> \g{T} \> \pc\con0 \> \targ \> \g{T^\dagger} \> \targ \> \pc\con0 \> \emp \> \targ \> \g{T} \> \targ \> \emp \> \emp \cr
  \g{H} \> \g{T} \> \emp \> \pc\con1 \> \emp \> \pc\con1 \> \emp \> \targ\con0 \> \pc\con1 \> \g{T^\dagger} \> \pc\con1 \> \targ\con0 \> \g{H}
  \begin{scope}[<->,thick,gray!30!black,every node/.style={midway,below,inner sep=1pt,font=\footnotesize}]
    \draw[->] ([yshift=-9pt] qc27) to[bend left=10] ([yshift=-9pt] qc25);
  \end{scope}
\end{qcirc}
\]
After again moving a CNOT to the left we have
\[
\begin{qcirc}
  \emp \> \g{T} \> \targ \> \emp \> \g{T^\dagger} \> \pc \> \pc \> \emp \> \emp \> \targ \> \emp \> \emp \> \emp \> \pc \> \emp \cr
  \emp \> \g{T} \> \pc\con0 \> \targ \> \g{T^\dagger} \> \emp \> \targ\con0 \> \targ \> \pc \> \pc\con0 \> \targ \> \g{T} \> \targ \> \emp \> \emp \cr
  \g{H} \> \g{T} \> \emp \> \pc\con1 \> \emp \> \targ\con0 \> \emp \> \pc\con1 \> \targ\con1 \> \emp \> \pc\con1 \> \g{T^\dagger} \> \pc\con1 \> \targ\con0 \> \g{H}
  \begin{scope}[<->,thick,gray!30!black,every node/.style={midway,below,inner sep=1pt,font=\footnotesize}]
    \draw[->] ([yshift=9pt] qc09) to[bend left=10] ([yshift=9pt] qc014);
    \draw[->] ([yshift=9pt] qc04) to[bend left=10] ([yshift=9pt] qc011);
    \draw[white,->] ([yshift=-13pt] qc25) to[bend left=10] ([yshift=-13pt] qc24);
    \draw[->] ([yshift=-9pt] qc25) to[bend left=10] ([yshift=-9pt] qc24);
  \end{scope}
\end{qcirc}
\]
Moving the CNOT to the end of the circuit allows the~$T^\dagger$ gate on the top line to  move into the third $T$-stage.
It also leads to a sequence of three CNOTs that corresponds to swapping two lines.
The two CNOTs right of the third $T$-stage can be extended to a line swapping
sequence by adding two identical CNOTs.
\[
\begin{qcirc}
  \emp \> \g{T} \> \targ \> \emp \> \pc \> \emp \> \pc \> \emp \> \emp \> \emp \> \g{T^\dagger} \> \emp \> \emp \> \pc \> \targ \cr
  \emp \> \g{T} \> \pc\con0 \> \targ \> \emp \> \g{T^\dagger} \> \targ\con0 \> \targ \> \pc \> \targ \> \g{T} \> \targ \> \pc \> \emp \> \pc\con0 \cr
  \g{H} \> \g{T} \> \emp \> \pc\con1 \> \targ\con0 \> \emp \> \emp \> \pc\con1 \> \targ\con1 \> \pc\con1 \> \g{T^\dagger} \> \pc\con1 \> \targ\con1 \> \targ\con0 \> \g{H}
\end{qcirc}
\]
Consequently, we can replace the gates by uncontrolled Fredkin gates
\[
\begin{qcirc}
  \emp \> \g{T} \> \targ \> \emp \> \pc \> \emp \> \pc \> \emp \> \g{T^\dagger} \> \emp \> \emp \> \pc \> \targ \cr
  \emp \> \g{T} \> \pc\con0 \> \targ \> \emp \> \g{T^\dagger} \> \targ\con0 \> \emp \> \g{T} \> \emp \> \targ \> \emp \> \pc\con0 \cr
  \g{H} \> \g{T} \> \emp \> \pc\con1 \> \targ\con0 \> \emp \> \emp \> \emp \> \g{T^\dagger} \> \emp \> \pc\con1 \> \targ\con0 \> \g{H}
  \swap{7}{1}{2}
  \swap{9}{1}{2}
\end{qcirc}
\]
and then apply the line swapping locally which results in the circuit from~\eqref{eq:amy-toffoli}:
\[
\begin{qcirc}
  \emp \> \g{T} \> \targ \> \emp \> \pc \> \emp \> \pc \> \g{T^\dagger} \> \emp \> \pc \> \targ \cr
  \emp \> \g{T} \> \pc\con0 \> \targ \> \emp \> \g{T^\dagger} \> \targ\con0 \> \g{T^\dagger} \> \targ \> \emp \> \pc\con0 \cr
  \g{H} \> \g{T} \> \emp \> \pc\con1 \> \targ\con0 \> \emp \> \emp \> \g{T} \> \pc\con1 \> \targ\con0 \> \g{H}
\end{qcirc}
\]
\endgroup

\end{document}
